\newtheorem{theorem}{Theorem}[section]
\newtheorem{definition}{Definition}
\newtheorem{lemma}[theorem]{Lemma}
\begin{document}
\title{Soft Cache Hits and the Impact of Alternative Content Recommendations on Mobile Edge Caching}
\author{Thrasyvoulos Spyropoulos\\Dept. Mobile Communications\\EURECOM,France\\spyropou@eurecom.fr \and Pavlos Sermpezis\\Institute of Computer Science\\ FORTH, Greece\\sermpezis@ics.forth.gr }

\maketitle
\begin{abstract}
Caching popular content at the edge of future mobile networks has been widely considered in order to alleviate the impact of the data tsunami on both the access and backhaul networks. A number of interesting techniques have been proposed, including femto-caching and "delayed" or opportunistic cache access. Nevertheless, the majority of these approaches suffer from the rather limited storage capacity of the edge caches, compared to the tremendous and rapidly increasing size of the Internet content catalog. We propose to depart from the assumption of hard cache misses, common in most existing works, and consider ``soft'' cache misses, where if the original content is not available, an alternative content that is locally cached can be recommended. Given that Internet content consumption is increasingly entertainment-oriented, we believe that a related content could often lead to complete or at least partial user satisfaction, without the need to retrieve the original content over expensive links. In this paper, we formulate the problem of optimal edge caching with soft cache hits, in the context of delayed access, and analyze the expected gains. We then show using synthetic and real datasets of related video contents that promising caching gains could be achieved in practice.
\end{abstract}
\category{C.2.1}{Network Architecture and Design}{Store and forward networks, Wireless communication}
\category{C.4}{Performance of Systems}{Modelling techniques}
\keywords{Caching; Opportunistic networks; Mobile data offloading; Optimization; Recommendation Systems}

\section{Introduction}
In the context of cellular networks, it is widely believed that aggressive densification, overlaying the standard macro-cell network with a large number of small cells (e.g., pico- or femto-cells), is a promising way of dealing with the ongoing data crunch~\cite{Sapountzis-Infocom2016}. As this densification puts a tremendous pressure on the backhaul network, researchers have suggested storing popular content at the ``edge'', e.g., at small cells~\cite{femto}, user devices~\cite{femtoD2D,Hui-offloading,Whitbeck-offloading,Pavlos-Offload2016}, or vehicles acting as mobile relays~\cite{vigneri2016} in order to avoid congesting the capacity-limited backhaul links, and reduce the access latency to such content. 

Local content caching has been identified as one of the five most disruptive enablers for 5G networks~\cite{Bocc2014}, sparking a tremendous interest of academia and industry alike. While caching had been widely studied in peer-to-peer systems and content distribution networks (CDNs)~\cite{borst2010}, 
the number of storage points required in future dense HetNets are many orders of magnitude more than in traditional CDNs (e.g., 1000s small cells per area covered by one CDN server). Therefore, the storage space per local cache must be significantly smaller to keep costs reasonable. Hence, even though studies assuming a large (CDN-type) cache deep inside the core network~\cite{Erman2011} give promising hit ratios, only a tiny fraction of the constantly and exponentially increasing content catalog could realistically be stored at each edge, leading to low ``local'' cache hit ratios~\cite{Paschos-infocom2016,Paschos-misconceptions}.

Additional ``global'' caching gains could be sought by increasing the ``effective'' cache size visible to each user through: (a) small cell overlaps, where each user is in range of multiple cells and caches (e.g., in the femto-caching case~\cite{femto}), (b) collocated users overhearing the same broadcast channel and benefiting from cached content in other users' caches (as in coded caching~\cite{Ali2014}), and (c) delayed content access, where a user might wait up to a TTL for its request, during which time more than one (fixed~\cite{Pavlos-Offload2016} or mobile~\cite{Hui-offloading,Whitbeck-offloading,vigneri2016}) caches can be seen. These ideas could theoretically increase the cache hit ratio significantly, when the ``global'' cache size becomes large enough  (e.g., when, in the latter example, the aggregate size of all caches a user sees within a TTL becomes comparable to the content catalog). Nevertheless, in most practical cases a local edge cache would realistically fit at most $10^{-3}/10^{-4}$ of the catalog (e.g., just the entire Netflix catalogue is about 3PBs). Even if the above methods offered a $10\times$ effective cache increase, they would not suffice to achieve significant cache hit ratios (e.g., in the notation of~\cite{Ali2014}, the key factor $KM/N$ would be equal to $10^{-2}$, leading to a global caching gain of $\frac{1}{1+10^{-2}}$, a mere $1\%$ of extra gain).   

Operators, are thus left with a very costly dilemma: bear a huge cost for the backhaul infrastructure (e.g., fiber everywhere) or bear a huge cost for CDN-like storage at each and every small cell. We believe this dilemma stems from the common underlying assumption of almost every caching scheme to try to satisfy \emph{every} possible user request, either from the local cache or, in the worst case, the content server. This leads to an immense catalogue of potential content. Our main assertion in this paper is that, in an Internet which is becoming increasingly content-centric and entertainment-oriented, a radically different approach could be beneficial, namely \emph{moving away from satisfying a given user request towards satisfying the user.} E.g., a user requesting a content X, not available locally (e.g., a fan wanting to follow last weekend's premier league's games), might be equally satisfied (in the best case) or not fully dissatisfied (in many cases), if she receives another content Y related to X (e.g., another premier league game from that weekend). Another example is users streaming content \emph{in sequence} (e.g., browsing YouTube videos back-to-back or listening to personalized radio). In that case, the selected content at each step is often \emph{recommended related to the previous one}, and the user might be almost equally happy with many alternatives. 
We will use the term \emph{soft cache hit} to describe such scenarios.
Finally, we believe such a system is timely given the recent interest of content providers with sophisticated recommendation engines, such as NetFlix and YouTube (i.e., Google), to act as Mobile Virtual Network Operators (MVNO) in the context of RAN Sharing~\cite{CellSlice2013}. 

To this end, we perform here a preliminary analysis and performance evaluation of such a system, in order to obtain initial insights. 
We first formulate the problem of edge caching with \emph{soft cache hits}, and analyze the expected gains. We then show using both synthetic data and a real dataset of related video contents that interesting caching gains could be achieved in practice. Our problem formulation and analysis takes place in the context of \emph{delayed content access} via static or mobile small cells~\cite{vigneri2016,Pavlos-Offload2016}, for two reasons: (a) we believe such delayed access is interesting for low-cost users (e.g., 2 euro plans for operators like Free~\cite{free}) or developing regions, and (b) could be easily combined with soft cache hits to achieve multiplicative gains. Nevertheless, the basic tenets of our approach are equally applicable to femto-caching (i.e., the framework of~\cite{femto}) or even other PHY-aware caching systems~\cite{Psounis2015}. 

To the best of our knowledge, the closest related work to the idea of soft cache hits is Roadcast~\cite{Roadcast}, proposing a query-response based P2P VANET system, where users' query requirements can be relaxed in order to get a matching response sooner. Nevertheless, this work focuses mostly on content similarity metrics and considers heuristics to achieve a square root based allocation policy, known to be optimal in P2P systems. Square root policies are suboptimal in our problem setup, as proven later, with or without soft cache hits~\cite{vigneri2016}.

\section{Problem Setup}\label{sec:model}

\emph{Content Model}: We consider a wireless network with randomly distributed users, requesting contents from a catalogue $\mathcal{K}$ with $\| \mathcal{K} \| = K$ contents. A user requests content $i \in \mathcal{K}$ with probability $p_i$. Without loss of generality (``w.l.o.g.'') we assume all contents have the same size.

\emph{Network Model}: Our network consists of $M$ small cells (SC). These SCs can be either static (as in the femto-caching model~\cite{femto}) or mobile (e.g. a vehicular cloud as in~\cite{vigneri2016}). We denote the set of all SCs as $\mathcal{M}$. We also assume that each SC is equipped with storage capacity of $C$ contents. Accessing content directly from the local cache, i.e. a \emph{cache hit}, is considered ``cheap'' while a \emph{cache miss} leads to an ``expensive'' access (e.g. of the backhaul link in~\cite{femto} or the macro-cell in~\cite{vigneri2016}). 

\emph{Delayed Access Protocol}: If the requested content is not available in a nearby small cell, the user waits until it encounters other small cells (as a result of user or cell mobility), until a Time-To-Live $T$. If the content is not found in any SC within $T$, a cache miss occurs and the content is fetched over the expensive link. 

\emph{Meeting Model}: Meetings between each user and each SC are IID, with the \emph{residual} time until such a meeting occurs being a random variable with CDF $F(t)$. 
\begin{lemma}\label{lemma:pmiss}
If there are $N$ total SCs storing the requested content, the probability of not encountering any of them within $T$ is 
\begin{equation}\label{eq:pmiss}
P_{miss}(N) = \overline{F(T)}^N
\end{equation}
\end{lemma}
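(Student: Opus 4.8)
The plan is to reduce the joint event ``no content-bearing SC is encountered within $T$'' to a product of per-SC non-encounter probabilities, exploiting the IID meeting assumption. First I would fix notation: let $X_j$ denote the residual time until the user meets the $j$-th of the $N$ SCs that store the requested content, so that by hypothesis each $X_j$ has CDF $F$, i.e. $P(X_j \le t) = F(t)$. The key translation step is to observe that a cache miss under the delayed-access protocol occurs precisely when \emph{none} of these $N$ meetings happens before the TTL expires; formally, the miss event is the intersection $\bigcap_{j=1}^{N} \{X_j > T\}$.

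The second step is to evaluate the single-SC factor. For one SC storing the content, the probability of not meeting it within $T$ is the survival (complementary CDF) value
\begin{equation*}
P(X_j > T) = 1 - P(X_j \le T) = 1 - F(T) = \overline{F(T)},
\end{equation*}
which just records the convention that $\overline{F(T)}$ denotes the complement $1-F(T)$. The third and final step invokes independence: since the user--SC meetings are assumed IID, the events $\{X_j > T\}$ for $j = 1,\dots,N$ are mutually independent, so the probability of the intersection factorizes,
\begin{equation*}
P_{miss}(N) = P\!\left(\bigcap_{j=1}^{N} \{X_j > T\}\right) = \prod_{j=1}^{N} P(X_j > T) = \overline{F(T)}^{\,N},
\end{equation*}
which is the claimed identity.

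The computation itself is routine, so the only real care needed is in justifying the two modelling inputs that the factorization rests upon. The main (mild) obstacle is being explicit about the independence across the $N$ SCs: the ``IID'' hypothesis supplies both identical distribution (each factor equals $\overline{F(T)}$) and independence (the product form), and I would make sure the statement is read as independence of the \emph{residual} meeting times, not of some underlying renewal process, so that no implicit memorylessness or Poisson assumption is smuggled in. A secondary point worth stating once is that the identity holds for the residual-time CDF $F$ regardless of its particular form, which is exactly why the lemma can later be applied with whatever contact-time distribution the mobility model induces.
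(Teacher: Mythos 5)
Your proof is correct and is exactly the argument the paper intends: the paper disposes of this lemma in one sentence (``follows directly from the definition of $F(t)$ and the assumption of IID meetings''), and your write-up simply makes explicit the same factorization of the joint non-encounter event into $N$ independent survival probabilities $\overline{F(T)}$. No difference in approach, just more detail.
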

The above result follows directly from the definition of $F(t)$ and the assumption of IID meetings.

For simplicity, in this paper we will focus on $F(t) = 1-\exp^{-\lambda t}$, so that $P_{miss}(N) = \exp^{- \lambda N t}$. The identical meeting rates assumption can be further relaxed, as explained in Section~\ref{sec:discussion}.


Up to this point, the problem setup is the same as in~\cite{vigneri2016,Pavlos-Offload2016}. The main departure from that model is captured in the following.

\emph{Content Relation Graph}: Each content $i \in \mathcal{K}$ has a set of \emph{related contents}. Let $u_{ij}$ denote the utility a given user gets if she originally asks for content $i$ but instead receives content $j$, where $0 \le u_{ij} \le 1$ and $u_{ii} = 1, \forall i$. The set of related contents $\mathcal{R}_{i} \subseteq \mathcal{K}$ can be formally defined as: $\mathcal{R}_{i} =\{j \in \mathcal{K}: j \ne i, u_{ij} > 0\}$. These relations define a content relation matrix (or graph) $\mathbf{U} = \{u_{ij}\}$. 

\emph{Delayed Access with Soft Cache Hits (SCH)}: A user again performs delayed access. However, if the requested content $i$ is not found within $T$, but a content in $j \in \mathcal{R}_{i}$ is found in one of the encountered caches, a soft cache hit occurs (and thus no expensive access is needed). A cache miss occurs if neither the requested nor any related content is found within $T$, in which case the original content is retrieved over the expensive link. The soft cache hit utility is equal to $u_{ij}$. We will consider two main cases for $\mathbf{U}$. 
\begin{itemize}
\item \emph{Soft Cache Hits (Case 1):} $u_{ij} = 1, \forall j \in \mathcal{R}_{i}$. Any related content gives a cache hit. As soon as one is found, the user stops looking.
\item \emph{Soft Cache Hits (Case 2):} $u_{ij} = c \; (0 < c < 1), \forall j \in \mathcal{R}_{i}$. If a related content $j \in \mathcal{R}_{i}$ is found before $T$, the user now continues looking for $i$ until $T$. If it fails, a \emph{soft cache hit} occurs and the access to the expensive link is still avoided. However, the utility attained is less than 1 (equal to $c$), which creates an interesting tradeoff. If neither $i$ nor any related $j$ is found by $T$, then a cache miss occurs, as usual. 
\end{itemize}


\section{Caching with Related Content}\label{sec:theory}
\subsection{Objectives}

The goal in the above defined problem is to minimize the number of bytes accessed over the expensive ``link'' (which is, as explained, a radio access link to a macro-cell and/or the backhaul network). When all contents have the same size, this is simplified to minimizing the number of (expensive) accesses, or equivalently, \emph{maximizing the cache hit ratio}. 

\begin{definition}[Feasible Placement]\label{def:feasible}
Let $N_{i}$ denote the number of SC caches storing content $i$. A placement vector $\mathbf{N} = \{N_{1},\dots,N_{K}\}$ is ``feasible'', if it satisfies the following constraints:
\begin{eqnarray}
0 \le & N_i & \le M,  \label{eq:const-cells} \\ 
\sum_{i=1}^{K} N_{i} & \le & M \cdot C. \label{eq:const-total}
\end{eqnarray}
\end{definition}
$N_{i}$ are the main optimization variables for our problem. Constraint (\ref{eq:const-cells}) says that the number of SCs storing content $i$ is non-negative and at most equal to the total number of SCs, and constraint (\ref{eq:const-total}) that the total number of content replicas stored at all the edge caches cannot exceed their total capacity.

In the traditional case of delayed access no soft cache hits are allowed. This will serve as our \emph{baseline} scenario. The problem objective (i.e., the expected hit ratio) in this case is given in the following lemma.

\begin{lemma}[Cache Hit Ratio - Base]\label{lemma:base-objective}
Assume a feasible placement vector $\mathbf{N}$. The cache hit rate, i.e., the expected number of user requests served locally when no soft cache hits are allowed is equal to
\begin{equation} \label{eq:obj-base}
g_{Base}(\mathbf{N})  = \sum_{i=1}^{K} p_{i} \cdot \left(1-e^{-\lambda \cdot T \cdot N_{i}}\right). 
\end{equation}
\end{lemma}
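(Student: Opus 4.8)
The plan is to compute the hit probability for a single user request by conditioning on which content is requested, and then to aggregate the conditional probabilities via the law of total probability. First I would introduce an indicator random variable $H$ that equals $1$ when a request is served from some encountered cache within the TTL $T$ and $0$ otherwise, so that the quantity of interest $g_{Base}(\mathbf{N})$ is precisely $E[H]$, the per-request hit probability (equivalently, by linearity of expectation, the expected fraction of requests served locally over a stream of IID requests).

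Next I would condition on the identity of the requested content. Since content $i$ is requested with probability $p_i$, I can write $E[H] = \sum_{i=1}^{K} p_i \cdot \Pr[\,\text{hit} \mid \text{request is } i\,]$. The key observation, and the one place where the \emph{base} (no soft cache hits) assumption enters, is that when no alternative content is accepted a hit for request $i$ occurs if and only if the user meets at least one of the $N_i$ caches that actually store content $i$ within time $T$; all other caches are irrelevant to this request. Hence $\Pr[\,\text{hit} \mid \text{request is } i\,] = 1 - P_{miss}(N_i)$.

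Then I would invoke Lemma~\ref{lemma:pmiss}, which gives $P_{miss}(N_i) = \overline{F(T)}^{\,N_i}$, and substitute the exponential residual-meeting CDF $F(t) = 1 - e^{-\lambda t}$, so that $\overline{F(T)} = e^{-\lambda T}$ and therefore $P_{miss}(N_i) = e^{-\lambda T N_i}$. Assembling the pieces yields $g_{Base}(\mathbf{N}) = \sum_{i=1}^{K} p_i \,(1 - e^{-\lambda T N_i})$, which is exactly \eq{eq:obj-base}.

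The argument is essentially a bookkeeping exercise, so I do not expect a single hard step; the only point requiring genuine care is justifying that the hit event for request $i$ depends solely on $N_i$ and not on \emph{which} particular caches hold content $i$. This is where the IID meeting assumption is essential: because each user--SC residual meeting time is identically distributed and the meetings are mutually independent, the $N_i$ caches holding $i$ are statistically interchangeable, and the probability of missing all of them factorizes into the product that Lemma~\ref{lemma:pmiss} supplies. I would therefore state this reliance on IID meetings explicitly before applying the lemma, since it is the assumption that turns the conditional hit probability into the clean closed form above.
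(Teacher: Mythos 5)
Your proposal is correct and follows exactly the route the paper intends: the paper's own justification is a one-line remark that \eq{eq:obj-base} follows ``in light of Lemma~\ref{lemma:pmiss} and the model of Section~\ref{sec:model}'', and your conditioning on the requested content plus substitution of $\overline{F(T)} = e^{-\lambda T}$ is precisely that argument spelled out. Your closing remark correctly identifies the IID meeting assumption as the only point needing care, which is also the only substantive ingredient in the paper's Lemma~\ref{lemma:pmiss}.
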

The objective (Eq.(\ref{eq:obj-base})) in the above lemma is straightforward in light of Lemma~\ref{lemma:pmiss} and the model of Section~\ref{sec:model}. 

As explained earlier, when we do allow soft cache hits, if content $i$ is requested, a cache hit can occur also if other contents $j$ (related to $i$) can be accessed on time. The modified objective for Cases 1 and 2 of the content relation graph $\mathbf{U}$ is given in the following two lemmas (the proofs are based on basic probabilistic arguments, and are omitted for brevity).
 
\begin{lemma}[Soft Cache Hit Ratio (Case 1)]\label{lemma:obj-case1}
Assume a feasible placement vector $\mathbf{N}$, and a content relation graph $\mathbf{U}$, where $u_{ij} \in \{0,1\}, \forall i,j \in \mathcal{K}$. The cache hit rate for $\mathbf{N}$ is equal to 
\begin{equation}\label{eq:obj-case1}
g_{SCH1}(\mathbf{N}) = \sum_{i=1}^{K} p_{i} \cdot \left(1- e^{-\lambda \cdot T \cdot \sum_{j = 1}^{K} N_{j} \cdot u_{ij} }\right)
\end{equation}
\end{lemma}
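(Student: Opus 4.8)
The plan is to compute, for each requested content $i$, the probability that a soft cache hit occurs under Case 1, and then sum these probabilities weighted by the request probabilities $p_i$ to obtain the expected hit ratio. Since in Case 1 any related content counts as a full hit ($u_{ij}=1$ for $j\in\mathcal{R}_i$), a hit for request $i$ occurs exactly when the user encounters \emph{at least one} SC storing \emph{any} content $j$ that is relevant to $i$ (including $i$ itself, since $u_{ii}=1$). The complementary event — a genuine cache miss — is that the user meets \emph{none} of the SCs storing any relevant content within the TTL $T$.

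**First** I would identify the total number of ``useful'' SC-content replicas for request $i$. A given content $j$ is useful for $i$ precisely when $u_{ij}=1$, i.e. $j\in\mathcal{R}_i\cup\{i\}$, and there are $N_j$ caches storing content $j$. Assuming that the placement spreads replicas across distinct caches (so that the relevant replicas sit in distinct SCs), the total number of SCs that would yield a hit for request $i$ is
\begin{equation}
N_i^{\mathrm{eff}} = \sum_{j=1}^{K} N_j \cdot u_{ij}.
\end{equation}
This is the natural generalization of the baseline, where only $N_i$ caches (the ones storing $i$ itself) are useful.

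**Next**, I would invoke Lemma~\ref{lemma:pmiss} directly with $N_i^{\mathrm{eff}}$ in place of $N$. The meeting model guarantees IID residual meeting times with CDF $F(t)$, so the probability of failing to meet any of these $N_i^{\mathrm{eff}}$ useful SCs within $T$ is $\overline{F(T)}^{\,N_i^{\mathrm{eff}}}$. Specializing to the exponential case $F(t)=1-e^{-\lambda t}$ gives a miss probability of $e^{-\lambda T N_i^{\mathrm{eff}}}$, and hence a per-request hit probability of $1 - e^{-\lambda T N_i^{\mathrm{eff}}}$. Averaging over the request distribution yields exactly \eqref{eq:obj-case1}.

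**The main obstacle** — and the point that needs the most care rather than the most computation — is the independence/disjointness assumption underlying the step $N_i^{\mathrm{eff}}=\sum_j N_j u_{ij}$. Lemma~\ref{lemma:pmiss} requires the meetings with the $N_i^{\mathrm{eff}}$ relevant SCs to be IID, which holds cleanly only if those relevant replicas reside in \emph{distinct} caches; if two related contents are co-located in the same SC, one meeting covers both and the effective count should not double-count that cache. A rigorous treatment would either assume (as is standard in this setup, cf.~\cite{vigneri2016,Pavlos-Offload2016}) that each content is placed in distinct caches and the relevant sets across contents are disjoint at the cache level, or argue that since meetings are per-SC, the relevant quantity is really the \emph{number of distinct SCs} holding any relevant content. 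Under the feasibility regime where the catalogue vastly exceeds cache capacity, overlaps are negligible and $\sum_j N_j u_{ij}$ is the correct count; I would state this placement convention explicitly and then the remainder follows immediately from Lemma~\ref{lemma:pmiss}.
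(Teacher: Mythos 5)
Your proof is correct and follows exactly the ``basic probabilistic argument'' the paper invokes but omits: count the effective number of useful replicas $\sum_j N_j u_{ij}$ for request $i$, apply Lemma~\ref{lemma:pmiss} with that count under the exponential meeting model, and average over $p_i$. Your explicit caveat about co-located related contents (the exponent should really be the number of \emph{distinct} SCs holding a relevant content, so \eqref{eq:obj-case1} implicitly assumes disjoint placement or treats overlaps as negligible) is a genuine subtlety the paper glosses over, and stating it as a placement convention is the right way to make the lemma rigorous.
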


\begin{lemma}[Soft Cache Hit Ratio (Case 2)]\label{lemma:obj-case2}
Assume a feasible placement vector $\mathbf{N}$, and a content relation graph $\mathbf{U}$, where $u_{ii} = 1, \forall i$, and $u_{ij} \in \{0,c\}, \forall j \in \mathcal{K}\backslash \{ i \} $. The cache hit rate for $\mathbf{N}$ is equal to 
\begin{align}
g_{SCH2}(\mathbf{N}) & = \sum_{i=1}^{K} p_{i} \cdot \Big[ \left(1-e^{-\lambda \cdot T \cdot N_{i}}\right) \nonumber \\ 
& + c \cdot e^{-\lambda \cdot T \cdot N_{i}} \cdot \left(1-e^{-\lambda \cdot T \cdot \sum_{j \in R_{i}} N_{j}} \right)\Big] \label{eq:obj-case2}
\end{align}
\end{lemma}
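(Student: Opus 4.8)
The plan is to compute the expected per-request utility by conditioning on the requested content $i$, decomposing the delayed-access outcome into the three mutually exclusive events dictated by the Case~2 protocol, and then averaging over $i$ with the request probabilities $p_i$. By linearity of expectation, $g_{SCH2}(\mathbf{N}) = \sum_{i=1}^{K} p_i\, E[U_i]$, where $E[U_i]$ is the expected utility of a user who originally requested $i$, so the whole argument reduces to evaluating $E[U_i]$ for a fixed $i$.

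First I would introduce two events defined over the delayed-access window $[0,T]$: let $A$ be the event that the user meets at least one of the $N_i$ caches holding the exact content $i$, and let $B$ be the event that the user meets at least one cache holding some related content $j \in \mathcal{R}_i$. Since $u_{ii}=1$ and $u_{ij}=c$ for $j \in \mathcal{R}_i$, the Case~2 protocol assigns utility $1$ whenever $A$ occurs (the user searches for $i$ up to $T$ and, if it is found, takes the exact hit regardless of whether a related content was also seen), utility $c$ on $A^c \cap B$ (a soft cache hit), and utility $0$ on $A^c \cap B^c$ (a true miss). Hence $E[U_i] = P(A) + c\,P(A^c \cap B)$, the case in which both $A$ and $B$ hold folding into the exact-hit term.

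Next I would evaluate the two probabilities using the meeting model and Lemma~\ref{lemma:pmiss}. The number of caches storing $i$ is $N_i$, so $P(A) = 1 - P_{miss}(N_i) = 1 - e^{-\lambda T N_i}$; the number of caches carrying a content related to $i$ is $\sum_{j \in \mathcal{R}_i} N_j$, so $P(B) = 1 - e^{-\lambda T \sum_{j \in \mathcal{R}_i} N_j}$. Substituting $P(A^c \cap B) = P(A^c)\,P(B)$ then gives $E[U_i] = (1-e^{-\lambda T N_i}) + c\, e^{-\lambda T N_i}\,(1 - e^{-\lambda T \sum_{j \in \mathcal{R}_i} N_j})$, and summing $p_i E[U_i]$ over $i$ reproduces Eq.~\eqref{eq:obj-case2}.

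The main obstacle is justifying the factorization $P(A^c \cap B) = P(A^c)P(B)$ together with the count $\sum_{j \in \mathcal{R}_i} N_j$ for the number of caches carrying related content. Both are exact precisely when the caches holding $i$ are disjoint from those holding its related contents and no single cache holds two distinct related contents: under the IID exponential meeting assumption, the residual meeting times to disjoint cache sets are independent, which delivers the product form, and the disjointness of related-content placements prevents double-counting in the exponent. I would therefore state this disjointness as the operative modeling assumption (it is also the natural regime for optimization, since co-locating $i$ with its own related content would squander a soft-hit opportunity), and note that absent it the events $A$ and $B$ become positively correlated, so Eq.~\eqref{eq:obj-case2} would then bound rather than equal the attainable soft-hit gain.
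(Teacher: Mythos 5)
Your proof is correct and is precisely the ``basic probabilistic argument'' the paper alludes to but omits: condition on the requested content, split the outcome into exact hit ($A$), soft hit ($A^c\cap B$), and miss, assign utilities $1$, $c$, $0$ respectively, and evaluate each probability via Lemma~\ref{lemma:pmiss}. Your closing observation --- that the product form $e^{-\lambda T N_i}\cdot(1-e^{-\lambda T\sum_{j\in\mathcal{R}_i}N_j})$ implicitly requires the caches holding $i$ and those holding its related contents to be disjoint (and no cache to hold two distinct related contents) --- is a legitimate caveat the paper glosses over, and correctly identifies when the stated expression is exact rather than a bound.
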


The main difference between these two cases is that, in the first case, finding a related content gives utility $1$ and is equivalent to a normal cache hit. However, in the second case, a related content allows the operator to avoid accessing the expensive link, but is penalized because the utility for the user is lower, leading to a utility of $c < 1$ (we remind the reader that $\mathcal{R}_{i}$ in the second term of Eq.(\ref{eq:obj-case2}) includes all related contents $j$, such that $u_{ij} > 0$, but does not include content $i$).

\subsection{Performance Improvement Under the\\ Baseline Placement}

Maximizing the objective of Lemma~\ref{lemma:base-objective} within the feasibility region of Definition~\ref{def:feasible}, defines the optimal cache allocation problem for the baseline scenario (no soft cache hits). This is in general an INLP (Integer Non-Linear Program) that relates to a ``multiple knapsack'' problem (with equal capacities and logarithmic rather than linear utilities) and is NP-hard to solve. Various polynomial approximation algorithms exist with good performance when the size of the caches are large enough to fit many contents. One such approximation can be achieved by solving a continuous relaxation of the problem (related to the fractional knapsack problem), where the optimization variables $N_{i} \in [0,M]$ are continuous. In that case, it is easy to show that the baseline problem is convex, whose optimal solution can be found analytically using Lagrangian multipliers and solving the KKT conditions (we refer the interested reader to~\cite{vigneri2016} for more details). Specifically, the optimal solution is given by
\begin{equation}\label{eq:optN-base}
N_i^{*} = \left\{ 
	\begin{array}{l l}
		0,	& \quad \text{if } p_i < L \\
		\frac{1}{\lambda T} \ln \left(\frac{p_{i} \lambda T}{\rho}\right), & \quad \text{if } L \leq p_i \leq U \\
		M,	&\quad \text{if } p_i > U
	\end{array} \right.
\end{equation}
\noindent where $L\triangleq \rho \cdot(\lambda T)^{-1}$, $U\triangleq \rho \cdot(\lambda T)^{-1} \cdot e^{\lambda \cdot M \cdot T}$, and $\rho$ is an appropriate Lagrange multiplier corresponding to the capacity constraint of Eq.(\ref{eq:const-total}).\footnote{An integer solution could be obtained by rounding~\cite{vigneri2016,femto}. Alternatively, one could interpret a non-integer $N_{i}$ value as follows: If $N_{i}=7.6$, $100\%$ of content $i$ is allocated to $7$ caches, and one more cache stores only $60\%$ of the content. If a user encounters the latter, she retrieves the remaining $40\%$ from the infrastructure.} 

Replacing $N_{i}^{*}$ in the objective of the baseline problem (Eq.(\ref{eq:obj-base})) gives us the optimal cache hit ratio, if we ignored related content. At the same time, replacing $N_{i}^{*}$ in the objective of Eq.(\ref{eq:obj-case1}) gives us the cache hit ratio when we can satisfy a request with related content, \emph{but the caching decisions were already taken and are the original ones}. 
(We will show later that we could do even better by considering the related content graph $\mathbf{U}$ when solving the cache placement problem.) The following theorem provides the expected improvement in terms of load on the expensive link, for a simple scenario where $L \le p_{i} \le U, \forall i \in \mathcal{K}$.

\begin{theorem}
Assume that $\|\mathcal{R}_{i}\| = L, \forall i \in \mathcal{K}$. The expected improvement in the cache hit ratio by recommending alternative contents, when the optimal cache placement algorithm is oblivious to these recommendations, is equal to
\begin{equation}\label{eq:gain-case1}
\frac{1-g_{Base}(\mathbf{N^{*}})}{1-g_{SCH1}(\mathbf{N^{*}})} = K \cdot \left(\frac{\lambda T}{\rho}\right)^{L-1} \frac{1}{\sum_{i \in \mathcal{K}} p_{i} \cdot \Pi_{j \in  \mathcal{K}} \frac{1}{p_{j}^{u_{ij}}}}
 \end{equation} 
 \end{theorem}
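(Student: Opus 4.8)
The plan is to treat this as a direct evaluation rather than an optimization argument: the ratio in \eqref{eq:gain-case1} is just the ratio of the two expected \emph{miss} masses under one and the same placement $\mathbf{N}^*$, so I would compute the numerator $1-g_{Base}(\mathbf{N}^*)$ and the denominator $1-g_{SCH1}(\mathbf{N}^*)$ separately and then divide. Two facts make everything collapse: the normalization $\sum_i p_i = 1$, and the closed form \eqref{eq:optN-base}. The hypothesis $L\le p_i\le U$ for every $i$ is precisely what guarantees that the interior branch $N_i^* = \frac{1}{\lambda T}\ln\!\left(\frac{p_i\lambda T}{\rho}\right)$ applies to all contents simultaneously, so that no $N_i^*$ sits at the boundary values $0$ or $M$; this uniformity is what lets the algebra telescope.

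First I would handle the numerator. Using $\sum_i p_i=1$ in \eqref{eq:obj-base} gives $1-g_{Base}(\mathbf{N}^*) = \sum_i p_i\, e^{-\lambda T N_i^*}$. Substituting the interior form of $N_i^*$ makes the exponential collapse, since $e^{-\lambda T N_i^*} = e^{-\ln(p_i\lambda T/\rho)} = \frac{\rho}{p_i\lambda T}$; each summand then reduces to the constant $\frac{\rho}{\lambda T}$, so the whole numerator is simply $K\cdot\frac{\rho}{\lambda T}$. This is exactly where the uniform interior assumption pays off, because the factor $p_i$ cancels cleanly against the $p_i^{-1}$ coming out of the logarithm.

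Next the denominator. The same use of $\sum_i p_i=1$ in \eqref{eq:obj-case1} gives $1-g_{SCH1}(\mathbf{N}^*) = \sum_i p_i\, e^{-\lambda T\sum_j u_{ij}N_j^*}$. Because $u_{ij}\in\{0,1\}$, I would factor the exponential as a product, $e^{-\lambda T\sum_j u_{ij}N_j^*} = \prod_{j: u_{ij}=1} e^{-\lambda T N_j^*} = \prod_{j: u_{ij}=1}\frac{\rho}{p_j\lambda T}$, and then split each factor into its constant part $\frac{\rho}{\lambda T}$ and its $p_j^{-1}$ part. Collecting terms, each row contributes $\left(\frac{\rho}{\lambda T}\right)^{n_i}\prod_j p_j^{-u_{ij}}$, where $n_i=\sum_j u_{ij}$ is the number of nonzero entries in row $i$ (note that the diagonal $u_{ii}=1$ always contributes). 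By the uniform-degree hypothesis this count $n_i$ is the same constant $n$ for every $i$, so the power of $\frac{\rho}{\lambda T}$ factors out of the sum, leaving $1-g_{SCH1}(\mathbf{N}^*) = \left(\frac{\rho}{\lambda T}\right)^{n}\sum_i p_i\prod_j p_j^{-u_{ij}}$. Dividing the numerator by the denominator then cancels exactly one power of $\frac{\rho}{\lambda T}$ and rearranges into \eqref{eq:gain-case1}.

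The one place to be careful — and the only real obstacle — is the exponent bookkeeping. The single surviving power of $\frac{\rho}{\lambda T}$ from the numerator lowers the denominator's exponent by one, so matching the stated factor $\left(\frac{\lambda T}{\rho}\right)^{L-1}$ forces the identification $n=L$; that is, the uniform number of cache entries able to serve a request for $i$, \emph{including} content $i$ itself via $u_{ii}=1$, must equal $L$. I would state this convention explicitly, since the earlier definition of $\mathcal{R}_i$ excludes $i$ and could otherwise suggest $n=\|\mathcal{R}_i\|+1$. Everything else is routine algebra; the entire content of the argument lives in the two collapses driven by $\sum_i p_i=1$ and the interior closed form of $N_i^*$.
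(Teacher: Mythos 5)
Your proposal follows essentially the same route as the paper's own proof: evaluate both miss masses under the interior closed form of $N_i^{*}$, observe that the baseline miss ratio collapses to $K\rho/(\lambda T)$ and the SCH1 miss ratio to $\left(\rho/(\lambda T)\right)^{L}\sum_{i} p_{i}\prod_{j} p_{j}^{-u_{ij}}$, and divide. Your closing remark on the exponent bookkeeping is a valid catch rather than a deviation --- the paper's algebra implicitly takes $\sum_{j} u_{ij}=L$ (i.e., counts content $i$ itself among the $L$), which sits in tension with the stated convention that $\mathcal{R}_{i}$ excludes $i$; making that identification explicit, as you do, is the only substantive addition to what is otherwise the identical argument.
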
 
 \begin{proof}
 The cache miss ratio (or ``load'' on the main infrastructure) in the baseline problem is $1-g_{Base}(\mathbf{N^{*}})$. Replacing Eq.(\ref{eq:optN-base}) into Eq.(\ref{eq:obj-base}) gives
\begin{align}
1-g_{Base}(\mathbf{N^{*}}) & \overset{Eq.(\ref{eq:obj-base})}{=} \sum_{i=1}^{K}  p_{i} \cdot e^{-\lambda \cdot T \sum_{j=1}^{K} \cdot N_{i}^{*}} \nonumber \\ 
\overset{Eq.(\ref{eq:optN-base})}{=}   \sum_{i=1}^{K}  &   p_{i} \cdot e^{ \ln \left( \frac{\rho}{p_{i} \lambda T} \right)} = \sum_{i=1}^{K}  p_{i} \cdot \frac{\rho}{p_{i} \lambda T} = \frac{K \rho}{\lambda T} 
\label{eq:base-miss-rate}
\end{align} 
Similarly, let's assume that an original request could be satisfied with a related content as in Lemma~\ref{lemma:obj-case1}. The cache miss ratio, denoted as $1-g_{SCH1}(\mathbf{N^{*}})$, can be calculated as:
 \begin{align*}
&1-g_{SCH1}(\mathbf{N^{*}}) \overset{Eq.(\ref{eq:obj-case1})}{=}  \sum_{i=1}^{K}  p_{i} \cdot e^{-\lambda \cdot T \cdot \sum_{j = 1}^{K} N^{*}_{j} \cdot u_{ij} } \\
& \overset{Eq.(\ref{eq:optN-base})}{=}  \sum_{i=1}^{K}  p_{i} \cdot e^{-\lambda \cdot T \cdot \left( \sum_{j = 1}^{K} \frac{1}{\lambda T} \ln \left(\frac{p_{j} \lambda T}{\rho}\right) \cdot u_{ij} \right) } \\
&  = \sum_{i=1}^{K}  p_{i} \cdot e^{ \sum_{j = 1}^{K} \ln \left(\frac{\rho}{p_{j} \lambda T}\right) \cdot u_{ij}} \\
&  = \sum_{i=1}^{K}  p_{i} \cdot \Pi_{j \in  \mathcal{K}} \left( \frac{\rho}{\lambda T} \frac{1}{p_{j}} \cdot u_{ij} \right) = \left( \frac{\rho}{\lambda T} \right)^{L} \sum_{i=1}^{K}  p_{i} \cdot \Pi_{j \in  \mathcal{K}}  \frac{1}{p_{j}^{u_{ij}}}
\end{align*} 
Hence, the gain from soft cache hits (case 1) is equal to $\frac{1-g_{Base}(\mathbf{N^{*}})}{1-g_{SCH1}(\mathbf{N^{*}})}$, which gives the desired Eq.(\ref{eq:obj-base}).
\end{proof}
 
The case where some contents receive no or maximum ($M$) copies, as in Eq.(\ref{eq:optN-base}), can be easily derived by modifying the summation in the above proofs. As a very simple example, consider the case of uniform content popularity, i.e. $p_{i} = \frac{1}{K}$. After some simple calculations, we get that the performance benefits by related content are equal to $\left( \frac{K \rho}{\lambda T}  \right)^{-(L-1)}$. However, we know that $\frac{K \rho}{\lambda T} \le 1$, since it is the cache miss rate of the base policy (see Eq.(\ref{eq:base-miss-rate})). Therefore, the above gain  $\left( \frac{K \rho}{\lambda T}  \right)^{-(L-1)} \ge 1$, and is increasing in $L-1$, the number of related contents per content $i$, as one would expect.
A similar result can be easily derived for Case 2, as well as when the number of non-zero elements on each row $i$ of $\mathbf{U}$ is different (i.e. not all equal to $L$). 
 
\subsection{Content Graph Aware Optimal Caching}
 
We have so far assumed that the caching policy is unaffected by the ability to recommend alternative contents. While this already leads to performance gains, as shown earlier, it is still suboptimal. For example, assume a user requesting content $A$ would be OK to receive instead content $B$ (i.e. $u_{AB} = 1$) and a user requesting content $B$ would be OK to receive content $A$ instead  (i.e. $u_{BA} = 1$). If both contents $A$ and $B$ are popular, \emph{a standard caching policy would give a high number of replicas to both}, according to Eq.(\ref{eq:optN-base}). However, this is clearly suboptimal here, since the caching algorithm could just store only one of the two at each cache, saving valuable capacity that could be used to store other contents. The following two theorems formalize this for the two content relation graph cases, discussed in Section~\ref{sec:model}. Due to space limitations, we only show the proof for the more generic Case 2. 
 
 \begin{theorem}[\textbf{U}-aware Optimal Caching (Case 1)]\label{lemma:opt-case1}
Assume a content relation graph $\mathbf{U}$, where $u_{ij} \in \{0,1\}, \forall i,j \in \mathcal{K}$. The optimal content placement that directly exploits related contents is given by vector $\mathbf{N^{*}_{SCH1}}$ which is the solution to the following optimization problem 
\begin{eqnarray*}
\underset{\mathbf{N}}{\mbox{maximize}} = \sum_{i=1}^{K} p_{i} \cdot \left(1- e^{-\lambda \cdot T \cdot \sum_{j = 1}^{K} N_{j} \cdot u_{ij} }\right), \\ 
\mbox{subject to } \mathbf{N} \mbox{ feasible (according to Definition~\ref{def:feasible})}
\end{eqnarray*}
Furthermore, the above problem is a convex optimization problem.
\end{theorem}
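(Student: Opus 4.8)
The plan is to split the statement into its two claims and to recognize that only the second carries real mathematical content. The first claim --- that the optimal $\mathbf{U}$-aware placement is the maximizer of $g_{SCH1}$ over the feasible region --- is immediate from the definitions: by Lemma~\ref{lemma:obj-case1}, $g_{SCH1}(\mathbf{N})$ is \emph{exactly} the expected cache hit ratio achieved under a feasible placement $\mathbf{N}$ when Case~1 soft cache hits are enabled, and the stated objective (Section~\ref{sec:theory}) is to maximize this hit ratio. Hence the optimal placement is by construction $\mathbf{N^{*}_{SCH1}} = \arg\max_{\mathbf{N}\text{ feasible}} g_{SCH1}(\mathbf{N})$, and there is nothing further to prove here. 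I would therefore concentrate on the convexity claim.

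For convexity I would proceed in two standard steps. First, I would note that, under the continuous relaxation $N_i \in [0,M]$ already adopted for the baseline in \eqref{eq:optN-base}, the feasible set of Definition~\ref{def:feasible} is the intersection of the box $[0,M]^{K}$ with the half-space $\sum_{i=1}^{K} N_i \le M\cdot C$; being a finite intersection of half-spaces, it is a convex polytope. Second, I would show that $g_{SCH1}$ is concave, so that maximizing it over this convex set is, by definition, a convex program. Writing the inner sum as $s_i(\mathbf{N}) = \sum_{j=1}^{K} N_j\, u_{ij}$, the key observation is that $s_i$ is an \emph{affine} (indeed linear) function of $\mathbf{N}$, because the entries $u_{ij}$ are fixed constants. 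Each summand of \eqref{eq:obj-case1} then has the form $p_i\bigl(1 - e^{-\lambda T\, s_i(\mathbf{N})}\bigr)$; since $t \mapsto e^{t}$ is convex and the composition of a convex function with an affine map is convex, $e^{-\lambda T\, s_i(\mathbf{N})}$ is convex in $\mathbf{N}$, so $1 - e^{-\lambda T\, s_i(\mathbf{N})}$ is concave, and scaling by $p_i \ge 0$ preserves concavity. A nonnegative-weighted sum of concave functions is concave, whence $g_{SCH1}$ is concave and the claim follows.

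As an explicit alternative to the composition argument, I would compute the Hessian $H_{k\ell} = -(\lambda T)^2 \sum_{i} p_i\, e^{-\lambda T\, s_i(\mathbf{N})}\, u_{ik} u_{i\ell}$ and rewrite it as $H = -(\lambda T)^2 \sum_{i} p_i\, e^{-\lambda T\, s_i(\mathbf{N})}\, \mathbf{u}_i \mathbf{u}_i^{\top}$, where $\mathbf{u}_i$ is the $i$-th row of $\mathbf{U}$; this is a nonnegative combination of rank-one positive semidefinite outer products, so $H \preceq 0$ everywhere, confirming concavity. I do not expect a genuine obstacle in this proof; the only points that demand care are making explicit that we work with the continuous relaxation (so the feasible region is convex rather than a discrete lattice) and that $\mathbf{U}$ enters purely as constant coefficients, keeping each $s_i$ affine --- were the $u_{ij}$ allowed to depend on $\mathbf{N}$, or were integrality imposed, convexity would in general fail.
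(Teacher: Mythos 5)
Your proof is correct. Note that the paper does not actually prove this theorem: it states that, for space reasons, only the more general Case 2 (Theorem~\ref{lemma:opt-case2}) is proved, and it does so by computing the Hessian of $g_{SCH2}$ entrywise and verifying $\mathbf{z}^{T}\mathbf{H}\mathbf{z}\le 0$. Your secondary argument --- writing $H=-(\lambda T)^{2}\sum_i p_i e^{-\lambda T s_i(\mathbf{N})}\,\mathbf{u}_i\mathbf{u}_i^{\top}$ as a nonnegative combination of rank-one outer products --- is exactly the specialization of that technique to Case 1 (and in fact makes explicit the step the paper leaves implicit, namely that the double sum over $m,n$ in $\mathbf{z}^{T}\mathbf{H}\mathbf{z}$ collapses to $-\sum_i p_i e^{-\lambda T s_i}\bigl(\sum_m z_m u_{im}\bigr)^{2}\le 0$). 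Your primary argument --- concavity of $1-e^{-\lambda T\,s_i(\mathbf{N})}$ as the composition of a concave function with the affine map $s_i$, plus closure of concavity under nonnegative sums --- is a cleaner, derivative-free route that is available precisely because in Case 1 the exponent is linear in $\mathbf{N}$; it does not transfer directly to Case 2, where the cross term $e^{-\lambda T N_i}\bigl(1-e^{-\lambda T\sum_{j\in\mathcal{R}_i}N_j}\bigr)$ is a product of exponentials of different affine maps and concavity is no longer a one-line composition fact (which is presumably why the paper resorts to the Hessian there). Your caveats --- that the claim concerns the continuous relaxation $N_i\in[0,M]$, and that the $u_{ij}$ must enter as constants for $s_i$ to stay affine --- are both apt and worth keeping.
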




\begin{theorem}[\textbf{U}-aware Optimal Caching (Case 2)]\label{lemma:opt-case2}
The optimal content placement defined by maximizing the objective of Lemma~\ref{lemma:obj-case2}, subject to the feasibility constraints of Definition~\ref{def:feasible}, gives the optimal content allocation vector $\mathbf{N^{*}_{SCH2}}$. Furthermore, the problem is also convex.
\end{theorem}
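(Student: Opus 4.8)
The plan is to read the theorem as two assertions: first, that the optimal placement is, by definition, the maximizer of $g_{SCH2}$ over the feasible set; and second---the real content---that this maximization is a convex program. The first part is immediate, since Lemma~\ref{lemma:obj-case2} establishes that $g_{SCH2}(\mathbf{N})$ \emph{is} the cache hit ratio under Case 2, so the allocation maximizing it is optimal by definition. I would therefore concentrate all the effort on the convexity claim.

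First I would dispose of the constraint set. The feasibility region of Definition~\ref{def:feasible} is cut out by the linear inequalities $0 \le N_i \le M$ and $\sum_i N_i \le M C$, which define a polytope and hence a convex set. It thus suffices to show that the objective $g_{SCH2}$ is concave, since maximizing a concave function over a convex set is a convex optimization problem. To establish concavity I would write $a \equiv \lambda T > 0$ and regroup the summand of \eq{eq:obj-case2}. Expanding the bracket gives the identity \[ (1-e^{-aN_i}) + c\,e^{-aN_i}\bigl(1 - e^{-a\sum_{j\in R_i}N_j}\bigr) = 1 - (1-c)\,e^{-aN_i} - c\,e^{-aS_i}, \] where $S_i \equiv N_i + \sum_{j \in R_i} N_j$ is a nonnegative linear (hence affine) function of $\mathbf{N}$. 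Consequently \[ g_{SCH2}(\mathbf{N}) = \sum_{i=1}^{K} p_i\Bigl[\,1 - (1-c)\,e^{-aN_i} - c\,e^{-aS_i}\,\Bigr]. \]

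Each term in this regrouped form is concave: the constant contributes nothing; $-(1-c)\,e^{-aN_i}$ is a \emph{nonnegative} multiple of the concave function $-e^{-aN_i}$ precisely because $0<c<1$; and $-c\,e^{-aS_i}$ is a nonnegative multiple ($c>0$) of $-e^{-t}$ composed with the affine map $t = aS_i$, and concavity is preserved under composition with an affine map. Since the weights $p_i$ are positive, $g_{SCH2}$ is a positively weighted sum of concave functions and is therefore concave, which, together with convexity of the feasible polytope, yields the claim. This argument parallels the Case~1 proof (Theorem~\ref{lemma:opt-case1}) but now carries the extra $-c\,e^{-aS_i}$ coupling term.

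The main obstacle---indeed the only point requiring care---is that one must regroup \emph{before} invoking concavity. In the original unexpanded form, the product $c\,e^{-aN_i}\bigl(1 - e^{-a\sum_{j\in R_i}N_j}\bigr) = c\,e^{-aN_i} - c\,e^{-aS_i}$ is a difference of a convex and a concave function and is not concave on its own; in particular the isolated piece $+c\,e^{-aN_i}$ is convex. The resolution is that this convex piece cancels against the $-e^{-aN_i}$ coming from the $(1-e^{-aN_i})$ term, leaving $-(1-c)\,e^{-aN_i}$, whose concavity hinges exactly on the hypothesis $c<1$. So the crux is the algebraic regrouping combined with the sign conditions $0<c<1$; once these are secured, the remainder is the standard ``concave objective over a polytope'' conclusion.
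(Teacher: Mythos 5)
Your proof is correct, but it takes a genuinely different route from the paper's. The paper argues at second order: it computes the Hessian of $g_{SCH2}$ entry by entry, forms the quadratic form $\mathbf{z}^{T}\mathbf{H}\mathbf{z}$, and observes that it is nonpositive because the cross terms assemble into $-(\lambda T)^{2}\sum_{i}p_{i}\,c\,e^{-\lambda T S_{i}}\bigl(\sum_{m}z_{m}\mathcal{I}_{im}\bigr)^{2}$ plus the nonpositive diagonal contribution $-(\lambda T)^{2}\sum_{m}z_{m}^{2}p_{m}(1-c)e^{-\lambda T N_{m}}$, i.e.\ a negative semi-definiteness check. You instead argue at the level of the function itself: the algebraic regrouping $1-(1-c)e^{-aN_{i}}-c\,e^{-aS_{i}}$ exhibits each summand as a positive combination of negative exponentials of affine maps, and concavity follows from standard composition rules with no derivative computation. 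The two arguments are really the same decomposition seen at different orders --- note that the paper's diagonal entries already carry the factor $(1-c)$ and its off-diagonal entries involve $e^{-\lambda T\sum_{j}N_{j}\mathcal{I}_{ij}}=e^{-\lambda T S_{i}}$ (since $\mathcal{I}_{ii}=1$), which is precisely your $S_{i}$ --- but your version is more elementary, avoids the risk of sign or index slips in the Hessian (the paper's displayed $H_{m,m}$ and final quadratic form in fact contain minor typos of exactly this kind), and makes completely transparent where the hypothesis $0<c<1$ enters: it is needed so that the coefficient $1-c$ of the convex piece $e^{-aN_{i}}$ left over after cancellation is nonnegative. What the paper's approach buys in exchange is an explicit Hessian, which could be reused for algorithmic purposes (e.g.\ Newton-type solvers or curvature bounds); what yours buys is a shorter, more robust certificate of concavity.
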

\begin{proof}
It is easy to see that the feasibility region (Definition~\ref{def:feasible}) is convex. The objective function needs to be concave (since this is formulated as a maximization problem). A sufficient condition is if its Hessian matrix $\textbf{H}$ is negative semi-definite, i.e., $\textbf{z}^{T}\cdot \textbf{H}\cdot \textbf{z}\leq 0,~\forall \textbf{z}=\{z_{i}\}\geq 0$.

Taking the derivatives of the objective function $g_{SCH2}$, we calculate the terms of the Hessian matrix
\begin{align*}
H_{m,m}	=-(\lambda\cdot T)^{2}& \cdot \Big[p_{m}\cdot (1-c)\cdot e^{-\lambda\cdot T\cdot N_{m}} \\&+ \sum_{i=1}^{K} p_{i}\cdot c\cdot \mathcal{I}_{im}\cdot \mathcal{I}_{in}\cdot e^{-\lambda\cdot T\cdot \sum_{j=1}^{K}N_{j}\cdot \mathcal{I}_{ij}}\Big]
\end{align*}
and for $m\neq n$
\begin{align*}
H_{m,n}	
&=-(\lambda\cdot T)^{2} \sum_{i=1}^{K} p_{i}\cdot c\cdot \mathcal{I}_{im}\cdot \mathcal{I}_{in}\cdot e^{-\lambda\cdot T\cdot \sum_{j=1}^{K}N_{j}\cdot \mathcal{I}_{ij}}
\end{align*}
where $\mathcal{I}_{nm}$ is $1$ if $u_{nm}>0$; otherwise is $0$.

Then, the product $\textbf{z}^{T}\cdot \textbf{H}\cdot \textbf{z}$ is given by the expression
\begin{align*}
\textbf{z}^{T}\cdot \textbf{H}\cdot \textbf{z} &= -(\lambda\cdot T)^{2} \sum_{m=1}^{K} \Big[ z_{m}^{2}\cdot p_{m}\cdot (1-c)\cdot e^{-\lambda\cdot T\cdot N_{m}} \\
&+\sum_{n=1}^{K}\sum_{i=1}^{K}z_{m}\cdot z_{n}\cdot p_{i}\cdot \mathcal{I}_{im}\cdot \mathcal{I}_{in}\cdot e^{-\lambda\cdot T\cdot \sum_{j=1}^{K}N_{j}\cdot \mathcal{I}_{ij}} \Big]
\end{align*}
which is always $\leq 0$.
\end{proof}

~\\

\section{Performance Evaluation}\label{sec:sims}
\subsection{Simulations Setup}\label{sec:sims-setup}

\textbf{Mobility Trace.} We use the TVCM mobility model to generate a trace, where nodes move in a square area $1000m\times1000m$ comprising three sub-areas of interest (communities). Each node moves inside its community for 60\% of the time, and leaves it for a few short periods. The area is entirely covered by macro-cell BSs, and also includes 25 non overlapping small-cell base stations (SCs), with a communication range of 100m. 

\textbf{Content Popularity.} We create $K=1000$ contents and assign to each of them a popularity value $p_{i}$ drawn from a Zipf distribution, $p_{i}\in[1,1000]$ with shape parameter $\alpha=2$. Power-law distributions have been shown to capture well real popularity patterns~\cite{youtube-traffic-from-edge,top-video-cellular,pavlos-dataset-AOC}.

\textbf{Utility Matrix.} To investigate the effect of the matrix U, we generate different matrices belonging to two generic classes:

\noindent\textit{(a) random U}: for each content pair $\{i,j\}$, the utility is $u_{ij} = 1$ with probability $p=\frac{L}{K}$ (otherwise it is 0), such that each content has on average $L$ related contents, i.e., $L = E[||\mathcal{R}_{i}||]$.

\noindent\textit{(b) popularity proportional U}: for each content pair $\{i,j\}$, the utility is $u_{ij} = 1$ with probability $p=L'\cdot \frac{\cdot p_{j}}{\sum_{j}p_{j}}$ (otherwise it is 0), where $p_{j}$ is the popularity of content $j$, and $L'$ is a normalization parameter that determines $E[||\mathcal{R}_{i}||]$. 

\textbf{YouTube datasets.} In addition to the synthetic popularity/utility patterns, we use real datasets from YouTube that contain information about \textit{video popularity} and \textit{related video lists}~\cite{youtube-related-videos-dataset}. Table~\ref{table:youtube-instances} contains information about the datasets we use, and some main statistics. We pre-process the data to remove entries with $0$ or no popularity value. For each video $j$ appearing in the related videos list of a video $i$, we set $u_{ij}=1$ and $u_{ji}=1$. Due to the sparseness of the datasets, we consider only the videos belonging to the largest connected component of the graph with vertices $\mathcal{V}=\{i:i\in\mathcal{K}\}$ and edges $\mathcal{E}=\{\epsilon_{ij}:i,j\in\mathcal{K}, u_{ij}=1\}$.


\begin{table}[b]
\caption{YouTube dataset instances information (after processing).}\label{table:youtube-instances}
\begin{tabular}{l|ccc}
{}					& Data (date / depth of search~\cite{youtube-related-videos-dataset})				& $K$			& $E[||\mathcal{R}_{i}||]$\\
\hline
Instance 1 			& 27 July 2008~~~~~~/~~3					&2098			&5.3\\
Instance 2			& 27 March 2008~~/~~1				&1086			&7.9
\end{tabular}
\end{table}

\subsection{Effects of Utility Matrix}
We first study the effects of the (a) density, $L=E[||\mathcal{R}_{i}||]$, and (b) type (\textit{random U} / \textit{popularity proportional U}) of the utility matrix. Specifically, in Fig.~\ref{fig:TVCM-synthU-sims-utility-vs-nb_related} we present the soft-cache hit ratio for the \textit{SCH1} and \textit{SCH2} (with $c=0.5$) cases, under the base optimal policy  $\mathbf{N^{*}}$, as well as the hit ratio of the scenarios without soft caches (\textit{no-soft caches}). As expected, the cache hit rate improves as the density $L$ of the matrix U (x-axis) increases. Under random U matrices the increase in the cache hit rate is almost linear (Fig~\ref{fig:TVCM-synthU-sims-utility-vs-nb_related-random}) on $L$, but quickly plateaus for the popularity proportional U matrices (Fig.~\ref{fig:TVCM-synthU-sims-utility-vs-nb_related-popularity-proportional}). This is reasonable as the achieved cache hit ratios for the popularity proportional U case already reach values $> 90\%$, for few related contents. The reason is that popular contents that have higher probability to appear in the related list of other contents, are also stored in more caches (under the base optimal policy).   

These initial observations show that the performance can be improved by recommending more contents (density) and/or by selecting carefully which contents to recommend (type of matrix U). This is a positive message, since there are more than one degrees of freedom for a system design, allowing thus improvements under various settings (e.g., restriction on the max number of recommended contents, predefined content relations), and enabling cross-layer (application/network) design and optimization approaches.

\begin{figure}
\subfigure[random U]{\includegraphics[width = 0.49\linewidth]{./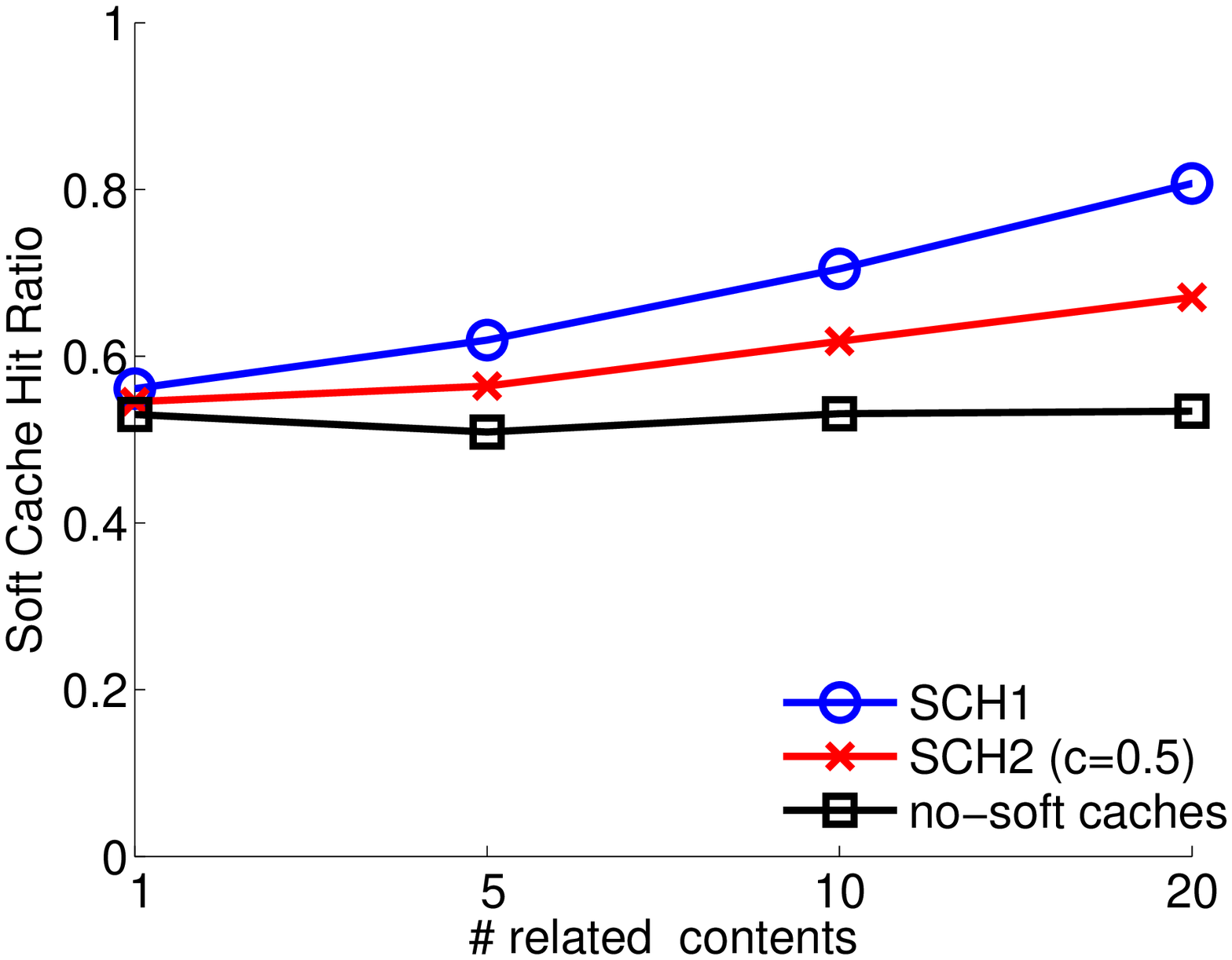}
\label{fig:TVCM-synthU-sims-utility-vs-nb_related-random}}
\subfigure[popularity proportional U]{\includegraphics[width = 0.49\linewidth]{./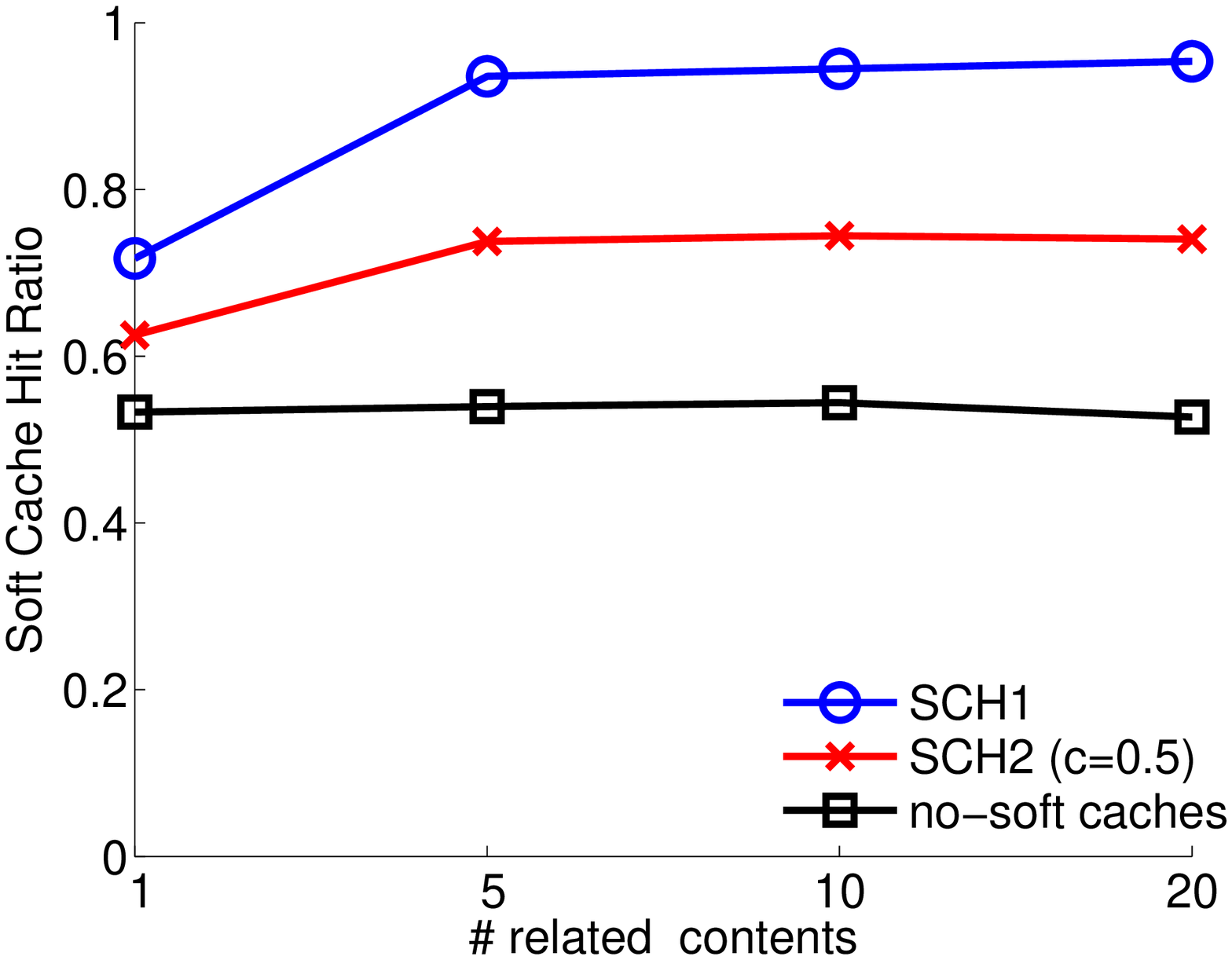}
\label{fig:TVCM-synthU-sims-utility-vs-nb_related-popularity-proportional}}
\caption{Scenarios where the matrices U are generated in (a) random  and (b) popularity proportional way, so that the expected number of related contents per content equals the value of the x-axis. $Q=20$ and $TTL = 5min$}\label{fig:TVCM-synthU-sims-utility-vs-nb_related}
\end{figure}

\subsection{Gains of Optimal Caching Policies}
In Fig.~\ref{fig:TVCM-synthU-sims-bars} we compare the performance gains of the base optimal policy $\mathbf{N^{*}}$ and the U-aware optimal policy  $\mathbf{N^{*}_{SCH1}}$ the \textit{SCH1} case (Theorem~\ref{lemma:opt-case1}). Under random U matrices (Fig.~\ref{fig:TVCM-synthU-sims-bars-random}), the achieved cache hit rate by $\mathbf{N^{*}_{SCH1}}$ is always higher than in the $\mathbf{N^{*}}$ policy, with an increase of $44\%$ (for $TTL=1$min) and $34\%$ (for $TTL=20$min). Here, we need to stress that the extra performance gain from the U-aware optimal caching policy $\mathbf{N^{*}_{SCH1}}$ comes without any cost for the system: the recommendation system (matrix U) and the caching capacity ($M$ and $Q$) remain the same, and only the caching policy changes (i.e., in practice, this corresponds to a simple modification in the content placement algorithm).

In the popularity proportional U case (Fig.~\ref{fig:TVCM-synthU-sims-bars-popularity-proportional}), the performance improvement of the U-aware optimal policy $\mathbf{N^{*}_{SCH1}}$ over the base optimal policy $\mathbf{N^{*}}$ is moderate ($9\%$ and $16\%$, for $TTL=1$min and $TTL=20$min, respectively). This indicates that when a recommendation system is carefully designed for a mobile environment (i.e., in our example, resulting to a popularity proportional matrix U), the U-aware caching policy does not add significant gains. As a result, only the content popularities is needed for the caching placement algorithm. Hence, the \textit{network provider} does not need to cooperate further with a \textit{content provider} (which designs also the recommendation system), e.g., YouTube or Netflix, and this facilitates the deployment of a soft-cache system in practice.

\begin{figure}
\subfigure[random U]{\includegraphics[width = 0.49\linewidth]{./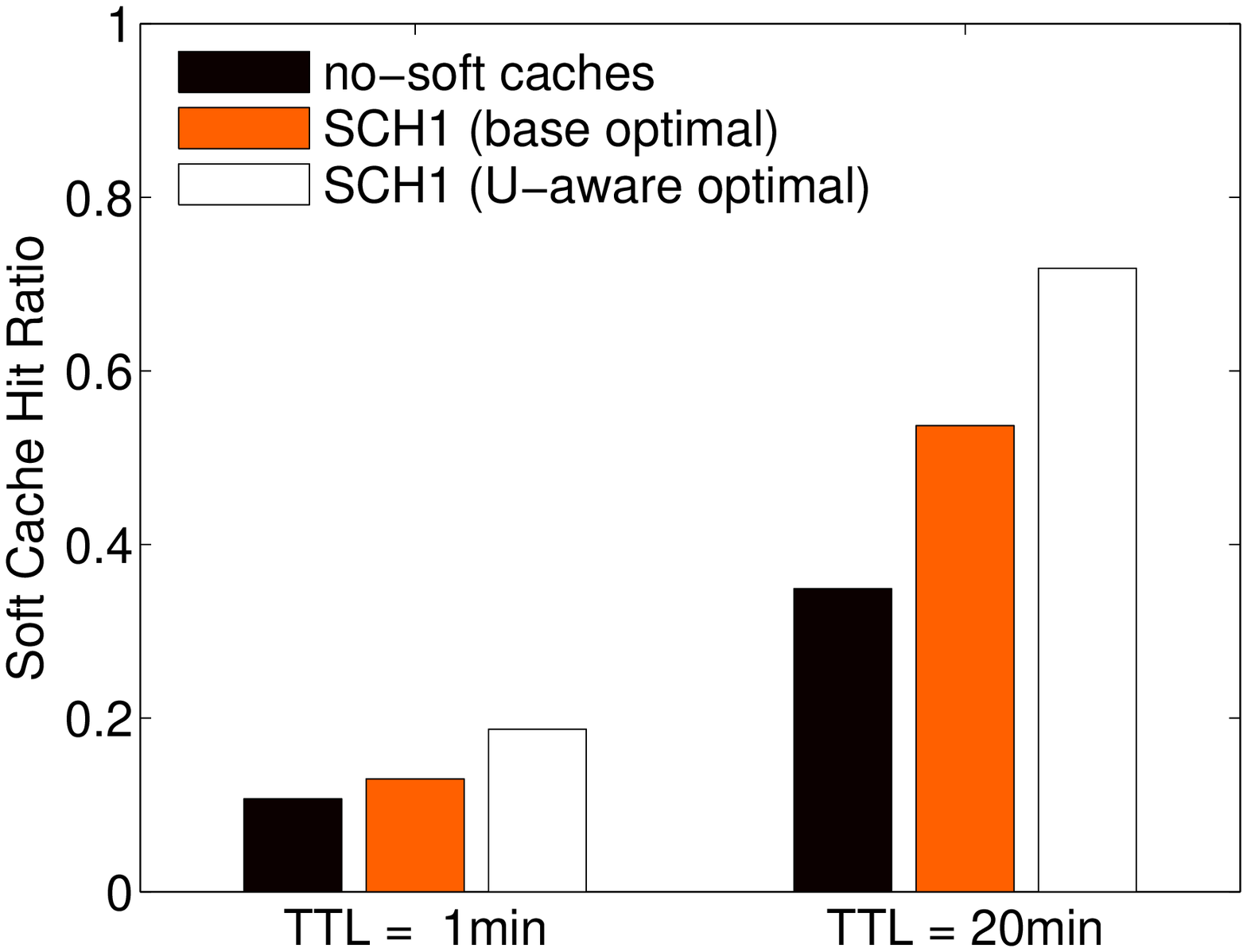}\label{fig:TVCM-synthU-sims-bars-random}}
\subfigure[popularity proportional U]{\includegraphics[width = 0.49\linewidth]{./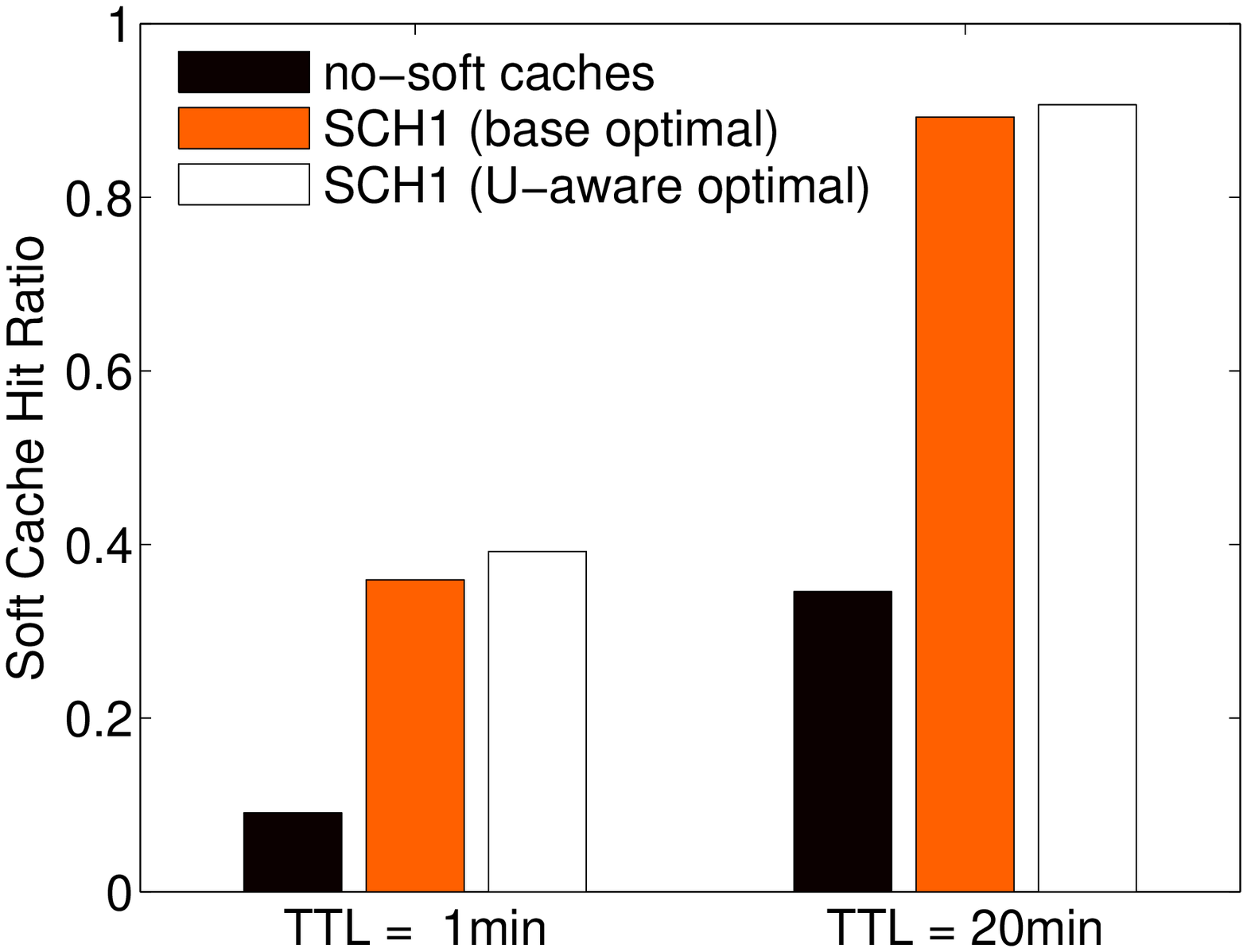}\label{fig:TVCM-synthU-sims-bars-popularity-proportional}}
\caption{Scenarios where the matrices U are generated in (a) random  and (b) popularity proportional way, so that the expected number of related contents per content is $L=5$. The capacity of caches is $Q=5$ (i.e., $0.5\%$ of the catalogue size $M$).}\label{fig:TVCM-synthU-sims-bars}
\end{figure}

\subsection{Gains of the YouTube's Recommendation System}
We conduct simulations on the TVCM mobility trace using the popularity/utility patterns of the YouTube datasets (see Section~\ref{sec:sims-setup}). In Table~\ref{table:TVCM-YouTube-sims-gains} we present the relative gain in the soft-cache hit ratio, i.e., $g_{Base}(\mathbf{N^{*}})$ vs. hit ratio under no-soft caches scenario. The improvement in performance by using soft-caches can be up to $20\%$, and -on average- is higher in \textit{Instance 1} where the content catalogue is larger (cf. Table~\ref{table:youtube-instances}). The gains are similar in other simulated scenarios we tested; with parameters $Q = \{5, 10, 20, 50\}$ and $TTL=\{0.5, 1, 5, 20\}min$.

Placing contents with the U-aware optimal policy $\mathbf{N_{SCH1}^{*}}$ gives similar gains as in the $\mathbf{N^{*}}$ case in the simulated scenarios. In light of the synthetic results, this perhaps suggests that the content relation graph for these YouTube instances more closely resemble the popularity proportional case, rather than the random.

\begin{table}[!h]
\centering
\caption{Gains in cache hit ratio in the YouTube scenarios.}\label{table:TVCM-YouTube-sims-gains}
\begin{tabular}{c|cc|cc}
{}&\multicolumn{2}{c|}{Instance 1}&\multicolumn{2}{c}{Instance 2}\\
{}					& $Q=5$ 			& $Q=50$			& $Q=5$ 			& $Q=50$	\\
\hline
$TTL=1min$ 			&{$11\%$}		&{$17\%$}		&{$12\%$}		&{$13\%$}\\
$TTL=20min$			&{$20\%$}		&{$19\%$}		&{11$\%$}		&{$7\%$}
\end{tabular}
\end{table}

\section{Discussion}\label{sec:discussion}
Our initial results suggest that soft cache hits could be a promising way to make edge caching scale, opening up new interesting operator-user performance tradeoffs. Some limitations and potential extensions of the proposed model are discussed here.

\emph{User-dependent recommendations:} Throughout this work, we have been assuming that the related contents for a requested content item $i$, and their related utilities depend only on item $i$, and not on the user that requested it. In a sense, this relates to \emph{item-item} collaborative filtering, where a new/alternative item is recommended based on its similarity with the requested one. Item-item recommendations have been claimed to offer some advantages compared to \emph{user-user} collaborative filtering~\cite{Collab-Filter-Sigmetrics2016}. Nevertheless, one user might be less happy than another, with the same alternative content. On the modeling side, one could take this into account by making $u_{ij}$ a random variable and using its expected value $E[u_{ij}]$ in the objective functions of Section~\ref{sec:theory}. 
Finally, on the recommendation side, a recommendation system could actually combine both types of collaborative filtering to make better recommendation. This would lead to different $\mathbf{U}$ graphs per user (or user clusters), whose integration and impact on our framework is part of future work.

\emph{Generalization of $\mathbf{U}$ graph:} For simplicity, in our analysis we assumed that related contents bring the same amount of utility ($1$ in case 1, and $c < 1$ in case 2). In general, different related contents might bring different amounts of utility. We could generalize our model by assuming a \emph{Case 3} where $u_{ii}=1, u_{ij}\in [0,1)~i\neq j$.
As in Case 2, if a user requesting content $i$, accesses (before $i$) any content $j\in \mathcal{R}_{i}$, she will be satisfied $u_{ij}\in(0,1)$ (less than $1$). She will keep on requesting $i$ till time $T$, but will \textit{not} accept any other related content\footnote{An alternative approach would be to keep requesting every cache encountered for potentially better related content. However, we believe this might put a high burden on the battery of the UE and the UE-SC traffic.}. Contrary to Case 2, however, the value of the utility $u_{ij}$ (to be contributed at the objective function) is not known a priori, since we cannot know a priori which content $j\in R_{i}$ will be accessed. 
One can still derive a closed form objective function with appropriate conditioning on all possible $j$, but we defer elaborating on this scenario for future work.


\emph{Generic mobility:} Although it would be quite hard to relax the independent mobility assumption (using traces in simulations, where most such assumptions break, tends to be the de facto way of testing this) the identical contact rate assumption could be relaxed. E.g., in the context of exponential meetings, it has been shown that heterogeneous rates could be approximated with their mean, either asymptotically or as a bound~\cite{TechRep-heterogeneous-spreading}. 

\emph{Soft Cache Hits for Femto-caching:} The proposed approach of soft cache hits and alternative content recommendations could apply equally well to more traditional caching frameworks that do not allow any delay, as is the popular femto-caching framework~\cite{femto}. The relation between users and small cells that each user can access is captured by a bipartite graph, and the control variables $x_{kj}$ define whether a content $k$ is stored in a cache $j$. In the case of $\mathbf{U}$ as in Case 1, if some user requests content $i$, and the small cells in her range are $\mathcal{G} \subseteq \mathcal{M} $, the hit probability is given by
\begin{equation}
 1- \Pi_{j \in \mathcal{G}} \cdot \Pi_{k = 1}^{K} \left(1-x_{kj}\right)^{u_{ik}},
\end{equation}
instead of $ 1- \Pi_{j \in \mathcal{G}} \left(1-x_{ij}\right)$,
in the original femtocaching case (see~\cite{femto} for more details).

\section{Conclusions}\label{sec:conclusion}
In this paper, we have proposed the idea of \emph{soft cache hits} for mobile edge caching systems with delay tolerance, where a user request can sometimes be (partially) satisfied, even if the original content is not available locally, by recommending some related contents. We have formulated and analyzed the performance of such a joint system, and derived the optimal related content aware cache placement. Our theoretical analysis and initial evaluation suggest that significant performance gains can be achieved, even with simple modifications to the baseline system. Furthermore, our results suggest that the structure of the content relation graph plays an important role on the actual achievable performance.

\bibliographystyle{ieeetr}

\end{document}